\newcommand{\ACC}{\mathsf{ACC}}
\newcommand{\body}{\mathsf{body}}
\newcommand{\BS}{\mathsf{BS}}
\newcommand{\OP}{\mathsf{OP}}
\newcommand{\Ctx}{\ensuremath{\mathsf{Ctx}}}
\newcommand{\DB}{\mathsf{DB}}
\newcommand{\mng}{\mathsf{mng}}
\newcommand{\br}{\mathsf{br}}
\newcommand{\KB}{\mathsf{KB}}
\newcommand{\kb}{\mathsf{kb}}
\newcommand{\head}{\mathsf{head}}
\newcommand{\pnot}{\mathsf{not}\ }
\newcommand{\U}{\mathcal U}
\newcommand{\V}{\mathcal V}
\newcommand{\altern}[1]{\ensuremath{\mathsf{vrt}(#1)}}
\newcommand{\m}[1]{\ensuremath{\mathsf{#1}}}
\newcommand{\aicarrow}{\Longrightarrow}
\begin{document}

\title{Active Integrity Constraints for Multi-Context Systems%
\thanks{This work was supported by the Danish Council for Independent
Research, Natural Sciences, grant DFF-1323-00247, and by FCT/MCTES/PIDDAC under centre grant
to BioISI (Centre Reference: UID/MULTI/04046/2013).
}}
\author{Lu\'\i s Cruz-Filipe\inst{1}
  \and Gra\c ca Gaspar\inst{2}
  \and Isabel Nunes\inst{2}
  \and Peter Schneider-Kamp\inst{1}}
\institute{
  Dept.\ of Mathematics and Computer Science,
  University of Southern Denmark
  \and BioISI---Biosystems \&\ Integrative Sciences Institute, Faculty of Sciences, University of Lisbon}

\maketitle

\begin{abstract}
  We introduce a formalism to couple integrity constraints over general-purpose
  knowledge bases with actions that can be executed to restore consistency.
  This formalism generalizes active integrity constraints over databases.
  In the more general setting of multi-context systems, adding repair suggestions to integrity constraints
  allows defining simple iterative algorithms to find all possible grounded repairs -- repairs for the
  global system that follow the suggestions given by the actions in the individual rules.
  We apply our methodology to ontologies, and show that it can express most
  relevant types of integrity constraints in this domain.
\end{abstract}

\section{Introduction}
\label{sec:intro}

Integrity constraints (ICs) for databases have been an important topic of research since the 1980s~\cite{Abiteboul1988}.
An early survey~\cite{Thalheim1991} already identified over $90$ relevant types of integrity constraints.
Since then, significant effort has been focused not only on identifying inconsistencies, but also on repairing inconsistent databases.

The same problem has been studied in other domains of knowledge representation. Integrity constraints for deductive databases~\cite{Asirelli1985} were also considered in the 1980s.
More recently, interest for integrity constraints has arisen in the ontology domain, with several approaches on how to define them and how to check their satisfaction~\cite{Fang2013,Motik2009,Ouyang2013}. Given its challenges, the more complex problem of repairing inconsistent knowledge bases has not received as much attention.

In this paper, we address the problem of computing repairs by combining two ideas: clausal-form integrity constraints for multi-context systems (MCSs)~\cite{CNS16} and active integrity constraints (AICs) for relational databases~\cite{Flesca2004}. We demonstrate the expressiveness of our formalism and show how it can be used to compute repairs for inconsistent MCSs in general, and for ontologies, in particular.

\paragraph{Contribution.}
The main contribution of this paper is a notion of AIC for MCSs, which enables us to compute repairs for inconsistent MCSs automatically, requiring only decidability of entailment in the individual contexts. Particularized to ontologies, our framework is expressive enough to capture all types of integrity constraints identified as relevant in \cite{Fang2013}, as we exemplify in the text.

The step from ICs for MCSs to AICs for MCSs is inspired by the similar step in the database case \cite{Flesca2004}. However, we draw more significant benefits in this more general setting. AICs are ICs that also specify possible repair actions in their head. In the database case, every clausal IC can be transformed into an AIC automatically. The goal, though, is to restrict in order to establish preferences among different possible repairs. In the general case, such a transformation would require solving complex abduction problems~\cite{Guessoum1998}.

Using AICs, we can automatically compute repairs for inconsistent MCSs, bypassing the need to solve such reasoning problems. The price to pay is the need to prove that an AIC is valid (Definition~\ref{defn:aic-valid}). The key observation here is that AICs should be written with a very clear semantic idea in mind, typically by an engineer with a deep knowledge of the underlying system, who should be able to show their validity formally. Thus, in practice, the complexity involved in computing each repair is moved to a one-time verification of validity of AICs.

\paragraph{Structure.}
We review previous work in Section~\ref{sec:backgr}, summarizing the key notions from~\cite{Caroprese2006,lcf16,CNS16}.
Section~\ref{sec:aic} introduces AICs for MCSs, showing that
they generalize the corresponding notion for relational databases, and studies their properties in general.
Section~\ref{sec:apps} focuses on the case of ontologies and evaluates our formalism against the classes of integrity constraints identified in \cite{Fang2013}.
Section~\ref{sec:algorithms} discusses how algorithms to compute repairs in the database setting can be
adapted to the general case of MCSs.
We conclude in Section~\ref{sec:concl}.

\subsection{Related work}
\label{sec:rw}

\paragraph{Database repairs.}
ICs for databases have been extensively studied throughout the last decades, and
we restrict ourselves to works most directly related to ours.

Integrity constraints are typically grouped in different syntactic categories~\cite{Thalheim1991}.
Many important classes can be expressed as first-order formulas, and can also be written in denial
(clausal) form -- the fragment expressable in our formalism.

Whenever an integrity constraint is violated, the database must be \emph{repaired} to regain consistency.
The problem of database repair is to determine whether such a transformation is possible, and many authors
have invested in algorithms for computing database repairs efficiently.
Typically, there are several possible ways of repairing an inconsistent database, and several criteria have been
proposed to evaluate them.
Minimality of change~\cite{Eiter1992,Winslett1990} demands that the database be changed as little as possible,
while the common-sense law of inertia~\cite{Przymusinski1997} states that every change should have an
underlying reason.
While these criteria narrow down the possible database repairs, it is commonly accepted that human interaction
is ultimately required to choose the ``best'' possible repair~\cite{Teniente1995}.

\paragraph{Active integrity constraints (AICs).} The formalism of AICs, introduced in~\cite{Flesca2004}, addresses
the issue of choosing among several possible repairs.
An AIC specifies not only an integrity constraint, but it also gives indications on how inconsistent databases
can be repaired through the inclusion of \emph{update actions}, which can be addition and removal of tuples
from the database -- a minimal set that can implement the three main operations of database updates~\cite{Abiteboul1988}.

The original, declarative, semantics of AICs defined \emph{founded} repairs~\cite{Caroprese2006}, in which every action is \emph{supported}: it occurs in the head of a constraint that is violated if
that action is not included.
Despite this characterization, there are unnatural founded repairs
where two actions mutually support each other, but do not have support from other actions.
The same authors then proposed \emph{justified} repairs~\cite{Caroprese2011}, which however are not intuitive and pose further problems~\cite{CEGN2013}.
Furthermore, justified repairs are intrinsically linked to the syntactic structure of
databases, and cannot be adapted to other knowledge representation formalisms.

\emph{Grounded repairs}~\cite{lcf16} form a middle ground between both semantics, requiring
support
for arbitrary subsets of the repair.
They are grounded fixed points of the intuitive operation of
``applying one action from the head of each AIC that is not satisfied'', which is in line with the intuitive
motivation for studying AICs.

Founded and justified repairs can be computed via revision
programming~\cite{Caroprese2011}.
Alternatively, an operational semantics for AICs~\cite{CEGN2013} was implemented for SQL
databases~\cite{KMIS2015}.
There, repairs are leaves of particular trees, yielding a semantics equivalent to the declarative one when existence of a repair is an NP-complete
problem.
For grounded and justified repairs, where this existence problem is $\Sigma^P_2$-complete, the
trees still contain all repairs, but may also include spurious leaves -- requiring a post
test that brings the overall complexity to the theoretical limit.

\paragraph{Multi-context systems (MCSs).}
MCSs, as defined in~\cite{Brewka2007b}, can be informally described as
collections of logic knowledge bases -- the \emph{contexts} -- connected by Datalog-style
\emph{bridge rules}.
Since their introduction, several variants of MCSs have been proposed that add to their potential fields of
application.
Relational MCSs~\cite{Fink2011} were proposed as a way to allow a formal first-order syntax, introducing
variables and aggregate expressions in bridge rules, and extending the semantics of MCSs accordingly.
Managed MCSs~\cite{Brewka2011}, which we describe in Section~\ref{sec:MCSs}, further generalize MCSs by
abstracting from the possible actions that change individual knowledge bases. Other variants, which are not directly relevant for this work, are discussed in~\cite{CNS16}.
%
%
A different line of research deals with repairing \emph{logical} inconsistency of an MCS
(non-existence of a model)~\cite{Eiter2011b}.

\paragraph{ICs in ontologies.}
Integrating ICs with ontology-based systems poses several challenges, mainly due to the open-world assumption and the absence of the unique name assumption~\cite{Fang2013,Motik2010,Patel-Schneider2012,Tao2010}.
In this context, ICs are conventionally modelled as T-Box axioms~\cite{Motik2009}, but variants based on hybrid knowledge bases, auto-epistemic logic, modal logic, and grounded circumscription have recently been proposed. For an overview of these proposals see Section~2 in~\cite{Ouyang2013}. For details on how some of these can be expressed by ICs over MCSs, using a systematic interpretation of ontologies as MCSs, see Section~4.5 in~\cite{CNS16}. The interpretation we use in Section~\ref{sec:apps} is a variant of the one presented therein.

\section{Background}
\label{sec:backgr}


\paragraph{AICs for databases.}
\label{sec:ICsandAICs}

Let $\Sigma$ be a first-order signature without function symbols.
A database is a set of ground atoms over $\Sigma$, and an update action is an expression of the form $+a$ or
$-a$, where $a$ is a ground atom over $\Sigma$.
An \emph{active integrity constraint} (AIC) over a database $\DB$ is a rule $r$ of the form
\begin{equation}
  \label{eq:aic-db}
  p_1,\ldots,p_m,\pnot(p_{m+1}),\ldots,\pnot(p_\ell)\\
  \aicarrow
  \alpha_1\mid\cdots\mid\alpha_k
\end{equation}
where each $p_i$ is an atom over the database's signature, every variable 
free in $p_{m+1},\ldots,p_\ell$ occurs in $p_1,\ldots,p_m$, and each
update action $\alpha_i$ is either $-p_j$ for some $1\leq j\leq m$ or $+p_j$ for
$m<j\leq\ell$.\footnote{In~\cite{Flesca2004}, existentially quantified
  variables can also occur in negative literals. This was not discussed in subsequent work, and we
  ignore it for simplicity of presentation.}
The \emph{body} of $r$ is $\body(r)=p_1,\ldots,p_m,\pnot(p_{m+1}),\ldots,\pnot(p_\ell)$, and the
\emph{head} of $r$ is $\head(r)=\alpha_1\mid\ldots\mid\alpha_k$.

If $r$ is ground, then $\DB$ \emph{satisfies} $r$, denoted $\DB\models r$, if $\DB\not\models p_i$ for some
$1\leq i\leq m$ or $\DB\models p_i$ with $m<i\leq\ell$.
In general, $\DB\models r$ if $\DB$ satisfies all ground instances of $r$.
Otherwise, $r$ is \emph{applicable} in $\DB$~\cite{Flesca2004}.
If $\eta$ is a set of AICs, then $\DB\models\eta$ if $\DB\models r$
for every $r\in\eta$.

A set of update actions $\U$ is \emph{consistent} if it does not
contain both $+a$ and $-a$ for any ground atom $a$.
Given a consistent $\U$, we write $\U(\DB)$ for the result of applying all actions in $\U$ to
$\DB$, and say that $\U$ is a
\emph{weak repair} for $\langle\DB,\eta\rangle$ if:
(i)~every action in $\U$ changes $\DB$ and (ii)~$\U(\DB)\models\eta$.
$\U$ is a \emph{repair} if $\V(\DB)\not\models\eta$ for every
$\V\subsetneq\U$~\cite{Caroprese2006}, and
$\U$ is \emph{grounded} if, for every $\V\subsetneq\U$, there exists a ground instance $r$ of a
rule in $\eta$ such that $\V(\DB)\not\models r$ and $\head(r)\cap(\U\setminus\V)\neq\emptyset$~\cite{lcf16}.


\paragraph{Multi-Context Systems.}
\label{sec:MCSs}
We now describe the variant of multi-context systems we use: \emph{managed multi-context systems} (also abbreviated to MCSs)~\cite{Brewka2011}.

A \emph{relational logic} $L$ is a tuple $\langle\KB,\BS,\ACC,\Sigma\rangle$, where $\KB$ is the
set of well-formed knowledge bases of $L$ (sets of well-formed formulas), $\BS$ is a set of possible belief
sets (candidate models), $\ACC:\KB\to 2^{\BS}$ is a function assigning to each knowledge base a set of acceptable
belief sets (its models), and $\Sigma$ is a signature 
generating first-order sublanguages of $\bigcup\KB$ and $\bigcup\BS$.

A \emph{managed multi-context system} is a collection of managed contexts $\{C_i\}_{i=1}^n$, with each
$C_i=\langle L_i,\kb_i,\br_i,D_i,\OP_i,\mng_i\rangle$ where: $L_i=\langle\KB_i,\BS_i,\ACC_i,\Sigma_i\rangle$
is a relational logic; $\kb_i\in\KB_i$; $D_i$ (the \emph{import domain}) is a set of constants from
$\Sigma_i$; $\OP_i$ is a set of operation names; $\mng_i:\wp(\OP_i\times\bigcup\KB_i)\times\KB_i\to\KB_i$
is a \emph{management function}; and $\br_i$ is a set of \emph{managed bridge rules}, with the form
\begin{equation}
  \label{eq:bridge}
  (i:o(p)) \leftarrow (i_1:p_1),\ldots,(i_q:p_q), \pnot (i_{q+1}:p_{q+1}),\ldots,\pnot(i_m:p_m)
\end{equation}
such that $o\in\OP_i$, $p\in\bigcup\KB_i$, $1\leq i,i_j\leq n$, and each $p_j$ is a
belief\footnote{Technically, $P_p$ is a \emph{relational element} of $C_{i_p}$: it can include variables,
  which when instantiated yield elements of $\bigcup\BS_{i_p}$ -- see~\cite{Brewka2011} for details.} of
$L_{c_j}$.

Intuitively, $\kb_i$ is the knowledge base of context $C_i$ and $\OP_i$ are the names of the operations that
can be applied to change it.
The management function defines the semantics of these operations: $\mng_i(O,\kb)$ is the result of applying
the operations in $O$ to $\kb$.
Bridge rules govern the interaction between contexts.%
\footnote{For the sake of presentation, we simplified the management function, which in the original work is allowed to return several possible effects for each action.}

A \emph{belief state} for an MCS $M=\{C_i\}_{i=1}^n$ is a set $S=\{S_i\}_{i=1}^n$ such that
each $S_i\in\BS_i$.
A ground instance of bridge rule~\eqref{eq:bridge} is \emph{applicable} in $S$ if $p_i\in S_i$ for
$1\leq i\leq q$ and $p_i\not\in S_i$ for $q<i\leq m$; the variables in the rule can only be instantiated by
elements of the import domain $D_i$.
A belief state is an \emph{equilibrium} for $M$ if it is stable under application of all bridge rules, i.e.:
\[S_i\in\ACC_i(\mng_i(\{\head(r)\mid r \in \br_i \mbox{ applicable in }S\},\kb_i))\]
In general, $M$ can have zero, one or several equilibria; if at least one exists, then $M$
is \emph{logically consistent}.
We present examples of MCSs in the next sections.

\paragraph{Integrity constraints for general-purpose knowledge bases.}
\label{sec:generalKBs}  

ICs for MCSs~\cite{CNS16} generalize clausal ICs to a
generic framework for reasoning systems -- covering not only relational databases, but also deductive
databases, peer-to-peer systems and ontologies, among others.
Syntactically, ICs are bridge rules with empty head, forming an added layer on top of an MCS that does not affect its semantics.

As MCSs may have several equilibria,
satisfaction of a set of ICs $\eta$ can be \emph{weak} --
there is an equilibrium satisfying all rules in $\eta$ -- or \emph{strong} --
all equilibria satisfy all rules in $\eta$.
In order to avoid vacuous quantifications, strong satisfaction only holds for logically consistent MCSs.
In general these properties are undecidable~\cite{CNS16},
but if entailment in every context is decidable then satisfaction of a set of ICs is in most cases as hard as
the hardest entailment decision problem.

In this paper, we do not explicitly mention the set of
ICs when clear from the context.
Moreover, our development applies both to weak and strong satisfaction, and
we simply say that an MCS is \emph{consistent} if it satisfies
the given set of ICs.
We explicitly write ``logical consistency'' for existence of an equilibrium.

\section{Active Integrity Constraints}
\label{sec:aic}

We begin by defining active integrity constraints over multi-context systems.

\begin{definition}
  An AIC over an MCS $M=\{C_i\}_{i=1}^n$ is a rule $r$ of the form
  \begin{multline}
    \label{eqn:aic}
    (i_1:P_1),\ldots,(i_m:P_m),\pnot(i_{m+1}:P_{m+1}),\ldots,\pnot(i_\ell:P_\ell)\\
    \aicarrow
    (j_1:\alpha_1)\mid\cdots\mid(j_k:\alpha_k)
  \end{multline}
  where $1\leq i_p,j_q\leq n$, each $P_p$ is a
  belief in $C_{i_p}$, each \emph{update action} $\alpha_q\in\OP_{j_q}\times\bigcup\KB_{j_q}$, and
  all variables in $P_{m+1},\ldots,P_\ell$ occur in $P_1,\ldots,P_m$.
\end{definition}

This definition follows the one for databases~\eqref{eq:aic-db}, and we define \emph{body} 
and \emph{head} of $r$ similarly.
Equation~\eqref{eqn:aic} also generalizes ICs for MCSs: each AIC corresponds to an IC by
ignoring its head, immediately yielding notions of weak and strong satisfaction for an AIC.
We also say that $r$ is \emph{applicable} to an MCS $M$ if $M\not\models r$.
Intuitively, in this case $M$ should be repaired by applying actions in $\head(r)$.

The reasoning capabilities of MCSs dictate that we cannot restrict the actions in the head of an AIC syntactically (as in the database world,
see Section~\ref{sec:ICsandAICs}).
We thus relax this requirement by only demanding that the actions are capable of solving the inconsistency.
It is also not reasonable to require that every action in $\head(r)$ be able to solve every inconsistency
detected by $\body(r)$: since inconsistencies may be triggered by derived information, they may have
different origins, and the different actions may be solutions for those different causes.

We are interested in sets of update actions that are applied simultaneously, i.e.~the order in which actions are executed should be irrelevant.
This corresponds to the consistency requirement usually considered in databases.%
\begin{definition}
  Let $M=\{C_i\}_{i=1}^n$ be an MCS, $\U$ be a finite set of update actions, and
  $\U_i$ be the set of actions in $\U$ affecting $C_i$.

  $\U_i$ is \emph{consistent} w.r.t.~$\kb_i$ if, for every permutation $\alpha_1,\ldots,\alpha_k$ of the elements of $\U_i$,
  $\mng_i(\U_i,\kb_i)=\mng_i(\alpha_1,\mng_i(\ldots,\mng_i(\alpha_k,\kb_i)\ldots))$.
  $\U$ is consistent w.r.t.~$M$ if each $\U_i$ is consistent w.r.t.~$\kb_i$, and in this
  case we write $\U(M)$ for the result of applying each $\U_i$ to each $\kb_i$.
\end{definition}

\begin{example}
  \label{ex:toy}
  We consider a concrete toy example of a deductive database with two unary base relations $p$ and $q$, a view
  consisting of a relation $r$ such that $r(x)\leftrightarrow p(x)\vee q(x)$, and the integrity constraint
  $\neg r(a)$.

  We formalize this as an MCS $M=\langle C_E,C_I\rangle$ where $C_E$ is an extensional database including
  predicates $p$ and $q$ (but not $r$), $C_I$ is the view context including predicate $r$ (but not $p$ or
  $q$), and they are connected by the bridge rules
  \[(I:r(X)) \leftarrow (E:p(X)) \qquad (I:r(X)) \leftarrow (E:q(X))\,.\]
  Furthermore, $\mng_E$ allows addition and removal of any tuples to $C_E$, using operations $\m{add}$
  and $\m{del}$, while $\mng_I$ does not allow any changes.
  (See~\cite{CNS16} for details of this construction.)

  From the structure of $M$, we know that $r(a)$ can only arise as a deduction from $p(a)$ or $q(a)$
  (or both), so it makes sense to write an AIC
  \[(I:r(a)) \aicarrow (E:\mathsf{del}(p(a)))\mid(E:\mathsf{del}(q(a)))\,.\]
  The actions on the head of this AIC solve the problem in all future states of $M$,
  since $C_I$ cannot change.
  However, restoring consistency may require performing both actions (if the database contains both $p(a)$ and $q(a)$).
\end{example}

This example also illustrates an important point: repair actions are written with a particular structure of the MCS in
mind.

\begin{definition}
  The set of \emph{variants} to an MCS $M$, denoted $\altern M$, is
  \[\altern M=\{\U(M)\mid\U\mbox{ is a finite set of update actions over $M$}\}\,.\]
\end{definition}

Restrictions on the actions in the head of AICs only range over $\altern M$, which contains all possible
future evolutions of $M$.

\begin{definition}
  \label{defn:aic-valid}
  An AIC $r$ of the form~\eqref{eqn:aic} is \emph{valid} w.r.t.~an MCS $M$ if:
  \begin{itemize}
  \item for every logically consistent $M'\in\altern M$ such that $M'\not\models r$,
    there is $\U\subseteq\head(r)$ with $\U(M')\models r$;
  \item for every $\alpha\in\head(r)$, there is $M'\in\altern M$ with $M'\not\models r$ and
    $\alpha(M')\models r$.
  \end{itemize}
\end{definition}
These conditions require that the set of suggested actions be complete (it can solve all inconsistencies) and
that it does not contain useless actions.

\begin{example}
The AIC in Example~\ref{ex:toy} is valid: the only possible changes to $M$ are in $\kb_E$, which only contains
information about $p$ and $q$, thus, in any element of $\altern M$ the only way to derive $r(a)$ is still from
either $p(a)$ or $q(a)$.
The second condition follows by considering $M'$ with $\kb_E=\{p(a)\}$ and $\kb_E=\{q(a)\}$.
\end{example}

\begin{proposition}
  Deciding whether an AIC is valid is in general undecidable.
\end{proposition}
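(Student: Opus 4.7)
The plan is to reduce from first-order entailment, which is undecidable. I would construct a single-context MCS $M$ whose underlying logic is classical first-order logic: let $\KB_1$ consist of finite sets of first-order sentences over some signature containing a binary relation, $\BS_1=\KB_1$, and $\ACC_1(T)=\{\mathrm{Th}(T)\}$ return the deductive closure of $T$. Fix $\kb_1=T_0$ for a consistent FO theory $T_0$ and set $\mng_1(O,\kb)=\kb$ for all $O$ and $\kb$, so that no operation ever changes the knowledge base. Then $M$ has a unique equilibrium $S$ with $S_1=\mathrm{Th}(T_0)$, $M$ is logically consistent, and $\altern{M}=\{M\}$.

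Given an arbitrary first-order sentence $\psi$, I form the AIC $r_\psi$ with body $(1:\psi)$ and empty head, corresponding to the integrity constraint that ``$\psi$ is not derivable''. Applying Definition~\ref{defn:aic-valid}: the second condition is vacuous, and since $\altern{M}=\{M\}$ forces $\U(M)=M$ for the only admissible $\U\subseteq\emptyset$, the first condition collapses to $M\models r_\psi$, i.e., $T_0\not\vdash\psi$. Taking $T_0=\emptyset$, this reduces to the classical undecidable problem of deciding whether $\psi$ is \emph{not} a first-order tautology, so validity of $r_\psi$ is undecidable.

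The main subtlety I expect is whether empty heads are admissible in~\eqref{eqn:aic} — the notation $\alpha_1\mid\cdots\mid\alpha_k$ is ambiguous on this point. If empty heads are disallowed, I would adapt the reduction by adding a kb-clearing operation $\m{clear}$ to $\OP_1$ with $\mng_1(\{\m{clear}\},\kb)=\emptyset$, so that $\altern{M}$ additionally contains the variant $M_\emptyset$ with empty knowledge base, and placing $(1:\m{clear})$ in the head of $r_\psi$. A case analysis of the two conditions of Definition~\ref{defn:aic-valid} over the enlarged $\altern{M}$ then expresses validity of $r_\psi$ as a Boolean combination of ``$T_0\vdash\psi$'' and ``$\psi$ is a tautology'', and a careful choice of $T_0$ (for instance an axiomatizable consistent extension of the empty theory over the chosen signature) still reduces validity of $r_\psi$ to an undecidable first-order entailment question.
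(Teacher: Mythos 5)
Your overall strategy---push an undecidable entailment problem of the underlying logic into the validity conditions of Definition~\ref{defn:aic-valid}---is the same as the paper's, but you place the undecidability in a different spot, and that is where the proposal is fragile. The paper's sketch takes a logic $L$ with undecidable entailment, a context with $\m{add}\in\OP_C$ acting as union, and the rule $(C:\neg B)\aicarrow(C:\m{add}(A))$, which is valid iff $A\models_L B$: the hard question is whether the \emph{head action actually repairs} the violation. Your primary construction instead freezes the MCS ($\altern M=\{M\}$) and uses an empty head, so ``validity'' collapses to ``$M\models r_\psi$'', i.e.\ to satisfaction of an ordinary IC. That is already known to be undecidable (the paper cites~\cite{CNS16} for exactly this), so even if it counts as a proof it proves nothing specific about AICs; worse, it hinges on reading $k=0$ into~\eqref{eqn:aic}, which the second validity condition and the whole motivation for AICs (``actions in the head should provide the means for restoring consistency'') strongly suggest is not intended. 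You correctly anticipate this objection, which is to your credit.

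Your fallback with $\m{clear}$ does live in the intended fragment ($k=1$), but the last step is a promissory note rather than a proof: working out the two conditions gives validity of $r_\psi$ iff ($T_0\vdash\psi$ and $\not\vdash\psi$), and you still need to exhibit a computable map $\theta\mapsto\psi_\theta$ making this conjunction undecidable. It can be done---e.g.\ take $T_0$ a finitely axiomatized essentially undecidable theory proving $0\neq 1$ and set $\psi_\theta=\theta\wedge 0\neq 1$, so that $\not\vdash\psi_\theta$ always holds and validity becomes $T_0\vdash\theta$---but as written the ``careful choice of $T_0$'' is the entire content of the reduction and is missing. The paper's construction avoids both issues at once: the head is nonempty by design, the second validity condition is discharged trivially, and the first condition is literally the entailment $A\models_L B$. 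I would either adopt that shape (make the added formula, not the body formula, carry the undecidability) or finish the fallback reduction explicitly.
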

\begin{proof}[sketch]
  Let $L$ be a logic with an undecidable entailment problem, $C$ be a context over $L$ with $\m{add}\in\OP_C$
  such that $\mng_C(\m{add}(\varphi),\Gamma)=\Gamma\cup\{\varphi\}$, and $M=\{C\}$.
  Assume also that $\altern M$ includes all knowledge bases over $L$.
  Then $(C:\neg B)\aicarrow(C:\m{add}(A))$ is valid iff $A\models_LB$.
  \qed
\end{proof}
 
In practice, proving validity of AICs should not pose a problem: AICs are written by humans with a very
precise semantic motivation in mind, and this means that the conditions in Definition~\ref{defn:aic-valid}
should be simple for a human to prove.

We now show that the framework we propose generalizes the database case.
A database $\DB$ can be seen as an MCS $M(\DB)$, defined as having a single context over first-order logic, whose knowledge base is $\DB$, with management function allowing addition ($+$) or removal ($-$) of facts, and where
the only set of beliefs admissible w.r.t.~a given database is the set of literals that are true in that database
(see~\cite{CNS16} for a detailed definition).
\begin{proposition}
  Every AIC over a database $\DB$ yields a valid AIC over $M(\DB)$.
\end{proposition}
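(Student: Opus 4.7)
The plan is to first establish the natural translation of the database AIC $r$ of form~\eqref{eq:aic-db} into an MCS AIC $r'$ over $M(\DB)$: each body atom $p_i$ becomes $(C:p_i)$ for the single context $C$ of $M(\DB)$, and each head action $-p_j$ (resp.\ $+p_j$) is replaced by the $\mathsf{del}$ (resp.\ $\mathsf{add}$) operation on $C$. The key structural observation is that $\altern{M(\DB)}$ is in bijection with databases $\DB'$ via $\DB' \mapsto M(\DB')$, since the management function supports arbitrary additions and removals; each such $M(\DB')$ has a unique belief set and is thus logically consistent; and the correspondence preserves satisfaction, i.e., $M(\DB') \models r'$ iff $\DB' \models r$ in the database sense. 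So validity of $r'$ reduces to two statements purely about the database AIC $r$.

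For the first condition of Definition~\ref{defn:aic-valid}, I would show that for any $\DB'$ with $\DB' \not\models r$ there is a consistent set $\U$ of ground head actions of $r'$ such that $\U(\DB') \models r$. The natural strategy is to choose, for each ground instance $r\sigma$ that is violated in $\DB'$, one action from $\head(r\sigma)$. Consistency of the resulting $\U$ is automatic: were $+p(a)$ and $-p(a)$ both present, they would arise from violated instances in which $p(a)$ is respectively a negative body atom (hence absent from $\DB'$) and a positive body atom (hence present in $\DB'$), a contradiction. Since applying chosen additions can in principle create fresh violations, the construction must be iterated, exploiting the finiteness of the active universe of $\DB'$ and the flexibility of choosing among head actions when $|\head(r)|>1$; the iteration stabilises at a consistent finite $\U$ repairing every violation.

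For the second condition, given a ground head action $\alpha$ of $r'$ arising as $\alpha_i\sigma$ for some grounding $\sigma$ of $r$, I would take $\DB'$ to be the minimal database $\{p_1\sigma, \ldots, p_m\sigma\}$ containing exactly the ground positive body atoms of the instance $r\sigma$. Then $r\sigma$ is violated in $\DB'$, since its positive body atoms are all present and its negative body atoms are all absent by minimality, while the same minimality rules out any other ground instance of $r$ being simultaneously violated. Applying $\alpha$ either deletes a positive body atom or adds a negative body atom of $r\sigma$, falsifying one body literal and thus satisfying the rule; a final check shows that no other ground instance becomes violated in $\alpha(\DB')$.

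The main obstacle is the first condition: while consistency of $\U$ follows cleanly from the syntactic restriction tying head actions to body literals of $r$, showing that $\U$ can always be chosen so that $\U(\DB')$ satisfies every ground instance---without triggering an unrepairable cascade of new violations---requires a careful case analysis hinging on the finite active universe and on the freedom afforded by choice of head action for each violated instance.
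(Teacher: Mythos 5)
Your overall route is the same as the paper's: translate the database AIC syntactically into an AIC over the single context of $M(\DB)$, identify $\altern{M(\DB)}$ with the collection of all databases (each logically consistent, with satisfaction preserved), and reduce both conditions of Definition~\ref{defn:aic-valid} to the syntactic fact that every head action of a database AIC falsifies the body literal it is tied to. The paper disposes of both conditions in one sentence by citing the database result that a violated instance is always repaired by performing one action from its head; you go further in worrying about non-ground rules whose ground instances interact, and that is precisely where your argument has two genuine gaps.

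For condition (1), you acknowledge but do not discharge the key step: that the per-instance choice of head actions can be iterated to a stable, consistent $\U$. Your consistency argument (an atom cannot be both present in and absent from $\DB'$) only covers the first round; once additions are made, a later round may call for deleting an atom just added, and the argument collapses. The cascade worry is real (e.g.\ $p(X),s(X,Y),\pnot(p(Y))\aicarrow +p(Y)\mid -s(X,Y)$ over $\{p(a),s(a,b),s(b,c)\}$), but it is resolved much more cheaply than by iteration: fix a \emph{single} disjunct $\alpha_i$ of $\head(r)$, say $-p_j$ with $j\leq m$ (resp.\ $+p_j$ with $j>m$), and let $\U$ be all ground instances of $\alpha_i$ over the (finite, by the safety condition on variables) active domain. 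Then the $j$-th body literal of \emph{every} ground instance of $r$ is false in $\U(\DB')$, so $\U(\DB')\models r$; $\U$ is trivially consistent because all its actions have the same polarity, and no iteration or case analysis is needed. For condition (2), your minimal witness $\{p_1\sigma,\ldots,p_m\sigma\}$ is not correct as stated: it may simultaneously violate \emph{other} ground instances of $r$, which the single ground action $\alpha_i\sigma$ does not repair. Take $p(X),p(Y),\pnot(q(X,Y))\aicarrow +q(X,Y)$ with $\sigma=\{X\mapsto a,Y\mapsto b\}$: in $\{p(a),p(b)\}$ the instances with $X=Y$ and the swapped instance are also violated, so $\alpha(\DB')\not\models r$. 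The witness must be saturated so that $r\sigma$ is the only violated instance (here, add $q(c,d)$ for every pair other than $(a,b)$); with that correction, and the observation that the added/deleted atom only falsifies body literals of $r\sigma$ itself, the argument goes through and matches the intent of the paper's sketch.
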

\begin{proof}[sketch]
  We write a generic AIC over a database~\eqref{eq:aic-db} as the AIC
  \[
    (1:p_1),\ldots,(1:p_m),\pnot(1:p_{m+1}),\ldots,\pnot(1:p_\ell)
    \aicarrow
    (1:\alpha_1)\mid\cdots\mid(1:\alpha_k)
  \]
  over $M(\DB)$.
  If $\DB$ does not satisfy the body of~\eqref{eq:aic-db}, then it can always be repaired by
  performing exactly one of the actions in its head~\cite{Caroprese2008}, establishing both conditions for
  validity.\qed
\end{proof}

\begin{definition}
  Let $M=\{C_i\}_{i=1}^n$ be an MCS, $\eta$ be a set of AICs over $M$ and $\U$ be a finite set of update actions.
  $\U$ is a \emph{weak repair} for $\langle M,\eta\rangle$ if $\U$ is consistent w.r.t.~$M$ and
  $\U(M)\models\eta$.
  Furthermore, $\U$ is \emph{grounded} if: for every $\V\subsetneq\U$, there is an
  AIC $r\in\eta$ such that $\V(M)\not\models r$ and $\head(r)\cap(\U\setminus\V)\neq\emptyset$.
\end{definition}
The definitions of weak and grounded repair directly correspond to those for the database case
(Section~\ref{sec:ICsandAICs}).
The notion of grounded repair implies, in particular, minimality under inclusion~\cite{lcf16}.

\section{Application: the Case of Ontologies}
\label{sec:apps}

This section is devoted to examples illustrating how our framework can be applied to the particular case of
integrity constraints over ontologies.

Previous work~\cite{Brewka2007b,CNS16} shows how to view an ontology as a context of an MCS.
In the present work, we refine this interpretation by representing an ontology as \emph{two} contexts:
one for the A-Box, one for the T-Box, connected by bridge rules that port every
instance from the former into the latter.
(This is reminescent of how deductive databases are encoded in MCSs, see~\cite{CNS16}.)
This finer encoding allows us, in particular, to reason about asserted instances (which are given in the
A-Box) and those that are derived using the axioms (see Example~\ref{ex:indiv}).

We further assume that the A-Box only contains instances of atomic concepts or roles ($C(t)$ or $R(t,t')$).
This option does not restrict the expressive power of the ontology, but it helps structure AICs: to include
instance axioms about e.g.~$C\sqcup D$, one instead defines a new concept $E=C\sqcup D$ in the T-Box and
includes instance axioms about $E$ in the A-Box (see also Example~\ref{ex:explicit}).

\begin{definition}
  A description logic $\mathcal L$ is represented as the relational logic
  $L_{\mathcal L}=\langle\KB_{\mathcal L},\BS_{\mathcal L},\ACC_{\mathcal L},\Sigma_L\rangle$, where:
  \begin{itemize}
  \item $\KB_{\mathcal L}$ contains all well-formed knowledge bases of~$\mathcal L$;
  \item $\BS_{\mathcal L}$ contains all sets of queries in the language of $\mathcal L$;
  \item $\ACC_{\mathcal L}(\kb)$ is the singleton set containing the set of queries to which $\kb$ answers
    ``Yes''.
  \item $\Sigma_{\mathcal L}$ is the first-order signature underlying $\mathcal L$.
  \end{itemize}
  An ontology $\mathcal O=\langle T,A\rangle$ based on $\mathcal L$ induces the multi-context system $M(\mathcal O)=\langle\Ctx(T),\Ctx(A)\rangle$
  where $\Ctx(T)=\langle L_{\mathcal L},T,\br_T,\Sigma_0,\emptyset,\emptyset\rangle$ with
  \begin{itemize}
  \item $\br_T$ contains all rules of the form $(T:C)(X) \leftarrow (A:C)(X)$ where $C$ is a concept, and
    $(T:R)(X,Y) \leftarrow (A:R)(X,Y)$ where $R$ is a role;
  \item $\Sigma_0$ is the set of constants in $\Sigma_{\mathcal L}$;
  \end{itemize}
  and $\Ctx(A)=\langle L_{\mathcal L},A,\emptyset,\Sigma_0,OP,\mng\rangle$ where $OP$ and $\mng$ are the set
  of allowed update operation names and their definition.
\end{definition}
The management function does not allow changes to the T-Box; the particular operations in the A-Box depend
on the concrete ontology.
This is in line with our motivation that writing AICs requires knowledge of the system's deductive abilities
(expressed by the T-Box), which should not change.

We now evaluate the expressivity of our development by showing how to formalize several types of ICs over
ontologies.
We follow the classification in Section~4.5 of~\cite{Fang2013}, which describes families of ICs determined by
OWL engineers and ontologists as the most interesting, as well as other types of ICs considered in the
scientific literature.
Several classes of ICs are syntactically similar, so we do not include examples for all categories
in~\cite{Fang2013}, but explain in the text how the missing ones can be treated.

Most of our examples are adapted from~\cite{Fang2013}, which frames them in a variant of the Lehigh University
Benchmark~\cite{Guo2005}, an ontology designed with the goal of providing a realistic scenario for testing.
This ontology considers concepts \m{student}, \m{gradStudent}, \m{class} and \m{email}, and roles
\m{hasEmail}, \m{enrolled} and \m{webEnrolled}.
Our semantics is: \m{class} is a concept including all classes of a common course; \m{enrolled(c,s)} holds if
student \m{s} is enrolled in course \m{s}; and \m{webEnrolled} holds if the student is furthermore to be
contacted only electronically.\footnote{This semantics is slightly changed from that of~\cite{Fang2013}, in
  order to make some aspects of our example more realistic.}
The actual contents of the A-Box are immaterial for our presentation, and we restrict ourselves to the fragment of the T-Box containing the following axioms.
\begin{align*}
  \m{gradStudent}&\sqsubseteq\m{student}
  & \exists\m{enrolled}.\m{student}&\sqsubseteq\m{class} \\
  \m{webEnrolled}&\sqsubseteq\m{enrolled}
  & \exists\m{hasEmail}.\m{email}&\sqsubseteq\m{student} \\
  && \exists\m{webEnrolled}^R.\m{class}&\sqsubseteq\exists\m{hasEmail}
\end{align*}

\subsection{Functional dependencies}
Functional dependencies are one of the most frequently occurring families of ICs: requirements that certain
relations be functional on one argument.
In our example, this applies to \m{hasEmail}: two distinct students cannot have the same e-mail.

Since ontologies do not have the Unique Name Assumption, we cannot distinguish individuals by checking name
equality (as in databases), but must query the ontology instead.
Furthermore, while in the database world such violations can only be repaired by removing one of the offending
instances, in ontologies, we can also add the information that two individuals are the same.

\begin{example}
  \label{ex:functional}
  Suppose that the management function includes operations \m{add} and \m{del} to add or remove a particular
  instance from the A-Box, as well as \m{assertEqual}, establishing equality of two individuals.
  Under these assumptions, we can express funcionality of e-mail as the following AIC.
  \begin{multline}
    \label{aic:functional}
    (A:\m{hasEmail}(X,Z)),(A:\m{hasEmail}(Y,Z)),\pnot(T:(X=Y))\\
    \aicarrow
    (A:\m{del}(\m{hasEmail}(X,Z)))\mid(A:\m{assert}(X=Y))
  \end{multline}
  Observe that, if $T$ explicitly proves that $X\neq Y$, then only the first action can be used, as asserting
  equality between $X$ and $Y$ would lead to an inconsistency.
  However, if this is not the case then the second action is also a repair possibility, and hence this AIC is
  valid.
  There are several possibilities for the implementation of \m{assert}: it can add the equality $X=Y$ to the
  A-Box, but it can also syntactically replace every occurrence of one of them for the other.
\end{example}

Several other types of dependencies (e.g.~key constraints, uniqueness constraints, functionality constraints)
are expressed by similar formulas.
Likewise, max-cardinality constraints can be represented as AICs with similar types of actions in the head
(deleting some instances or unifying some individuals).

\subsection{Property domain constraints}

This family of ICs specifies that the domain of a role should be a subset of a particular concept.
In case such a constraint is violated, the offending element has to be added as an instance of that concept.
The treatment of these ICs is thus very similar to the database case.

\begin{example}
  \label{ex:property}
  To model that only students can be enrolled in courses, we write the following AIC.
  \begin{equation}
  \label{aic:property}
  (T: \m{enrolled}(X,Y), \pnot(T: \m{student}(Y))
  \aicarrow
  (A: \m{add}(\m{student}(Y)))
  \end{equation}
\end{example}
We could also add the action $(A:\m{del}(\m{enrolled}(X,Y)))$ to the head of this AIC; note that it would only
restore consistency in the case where this fact is explicitly stated in the A-Box and not otherwise derivable.
Property range constraints (restricting the range of a role) can be similarly treated.

\subsection{Specific type constraints}
In many applications, it is interesting to minimize redundancy in the A-Box.
In particular, in the presence of inclusion axioms, it is often desirable only to include instances pertaining
to the most specific type class of each individual.

\begin{example}
  \label{ex:indiv}
  Since $\m{gradStudent}\sqsubseteq\m{student}$, we guarantee that the A-Box only contains
  instances of the most specific class a student belongs to by writing:
  \begin{equation}
    \label{aic:indiv}
    (A:\m{gradStudent}(X)),(A:\m{student}(X))
    \aicarrow
    (A:\m{del}(\m{student}(X)))
  \end{equation}
\end{example}

Thus, if the A-Box contains e.g.~\m{student(john)} and \m{gradStudent(john)}, then the axiom \m{student(john)}
will be removed.
The system will still be able to derive \m{student(john)}, but only in context $C_T$ (using the information in
the T-Box).
The separation of the A-Box and T-Box in different contexts is essential to express this
integrity constraint in our formalism.
Constraints that distinguish between assertions explicitly stated in the A-Box and derived ones have been
considered e.g.~in~\cite{Patel-Schneider2012}.

\subsection{Min-cardinality constraints}
We now consider a more interesting type of ICs: min-cardinality constraints.
Inconsistencies arising from the violation of such constraints are hard to repare automatically, as such a
repair requires ``guessing'' which instances to add.
Using AICs and adequate management functions, we can even specify the construction of ``default'' values that
may depend on the actual ontology.

\begin{example}
  \label{ex:min-card}
  We want to express that each class must have a minimum of $10$ students.
  Classes with less enrolled students should be closed, and those students moved to the smallest remaining
  class using an operation \m{redistribute}.
  \begin{equation}
    \label{aic:min-card}
    (T:(\leq 10.\m{enrolled})(X))
    \aicarrow
    (A:\m{redistribute}(\neg\m{class}(X)))
  \end{equation}
\end{example}
For this AIC to be valid, \m{redistribute} must check whether students are \m{enrolled} or
\m{webEnrolled} and change the appropriate instance in the A-Box.
This also uses the knowledge that instances of \m{enrolled} cannot be derived in other ways.

A similar kind of constraints are totality constraints, which require that a role be total on one of its
arguments.
In our example, we could require every student to be enrolled in some class, and use an adequate management
function to add non-enrolled students to e.g.~the smallest class.

\subsection{Missing property value constraints}
We now turn our attention to a kind of ICs that is also very
common in ontologies: disallowing unnamed individuals for particular properties~\cite{Patel-Schneider2012}.

\begin{example}
  \label{ex:explicit}
  Our ontology specifies that all students that are \m{webEnrolled} in a class must have an e-mail
  address.
  However, for the purpose of contacting these individuals, this e-mail address must be explicitly provided.
  We address this issue with the following AIC.
  \begin{multline}
    \label{aic:explicit}
    (T:(\exists\m{hasEmail})(X)),\pnot(T:\m{hasEmail}(X,Y))\\
    \aicarrow
    (A:\m{unregister}(\neg\exists\m{webEnrolled}^R(X)))
  \end{multline}
  Here, \m{unregister} replaces the axiom $\m{webEnrolled}(X)$ with $\m{enrolled}(X)$, as it makes sense
  to keep the student enrolled in the course.
  Validity of this AIC follows from observing that the only possible ways to derive $\exists\m{hasEmail}(X)$
  are either from an explicit assertion $\m{hasEmail}(X,Y)$ or indirectly from $\m{webEnrolled}(Z,X)$.
\end{example}

This example also justifies our requirement that the A-Box can only contain instances of atomic concepts or
roles.
If the A-Box were allowed to contain e.g.~$\exists\m{hasEmail(john)}$, then AIC~\eqref{aic:explicit} would
no longer be valid.
By restricting to atomic concepts, the only way to perform a similar change would be by defining a new concept
as equivalent to $\exists\m{hasEmail}$ -- and this information would be present in the T-Box, making it clear
that AICs should consider it.

\subsection{Managing unnamed individuals}
Finally, we illustrate how we can write AICs in different ways to control whether they range over all
individuals of a certain class, or only over named ones.

\begin{example}
  \label{ex:unnamed}
  For ecological reasons, we want all students with an e-mail address to be enrolled in the web version of
  courses.
  We can write this as follows.
  \begin{multline}
    \label{aic:named}
    (T:(\m{hasEmail})(Y,Z)),(T:\m{enr}\lefteqn{\m{olled}(X,Y)),\pnot(T:\m{webEnrolled}(X,Y))}\\
    \aicarrow(A:\m{webEnroll}(\m{webEnrolled}(X,Y)))
  \end{multline}
  Operation \m{webEnroll} will replace $\m{enrolled}(X,Y)$ with $\m{webEnrolled}(X,Y)$, dually to
  \m{unregister} in the previous example.

  Alternatively, we could consider writing
  \begin{multline}
    \label{aic:unnamed}
    (T:(\exists\m{hasEmail})(Y)),(T:\m{enr}\lefteqn{\m{olled}(X,Y)),\pnot(T:\m{webEnrolled}(X,Y))}\\
    \aicarrow(A:\m{webEnroll}(\m{webEnrolled}(X,Y)))
  \end{multline}
  In this particular context, this formulation is undesirable, as it will also affect individuals who do not have a
  known e-mail address.
  By writing an explicit variable in the first query of the body, as in~\eqref{aic:named}, we guarantee that
  we only affect those individuals whose e-mail address is known.
\end{example}
Similar considerations about the two possible ways to formulate this type of ICs can be found
in~\cite{Patel-Schneider2012}.

\section{Computing Repairs}
\label{sec:algorithms}

In~\cite{CEGN2013}, we showed how to use active integrity constraints to compute repairs for inconsistent
databases, by using the actions in the head of unsatisfied AICs to build a \emph{repair tree} whose leaves
were the repairs.
We showed how the construction of the tree could be adapted to the different types of repairs considered
originally in~\cite{Caroprese2011}; in particular, for the case of grounded repairs (which is the one we are
interested in this work), it is enough to expand each node with the actions in the heads of the AICs that are
not satisfied in that node.

We adapt this construction to the framework of AICs over MCSs.
As we will see, the algorithms have to be adapted to this more general scenario, but we can still construct
all grounded repairs for a given (inconsistent) MCS automatically, as long as entailment in all contexts is
decidable.

\begin{definition}
  Let $M$ be an MCS and $\eta$ be a set of integrity contraints over $M$.
  The \emph{repair tree} for $\langle M,\eta\rangle$, $\mathcal T_{\langle M,\eta\rangle}$, is defined as follows.
  \begin{itemize}
  \item Each node is a set of update actions.
  \item A node $n$ is \emph{consistent} if: (i)~$n(M)$ is logically consistent and (ii)~if $n'$ is the parent
    of $n$, then $n$ is a consistent set of update actions w.r.t.~$n'(M)$.
  \item Each edge is labeled with a closed instance of a rule.
  \item The root of the tree is the empty set $\emptyset$.
  \item For each consistent node $n$ and rule $r$, if $n(M)\not\models r$ then
    $n'=n\cup\U$ is a child of $n$ if (i)~$\U\subseteq\head(r)$, (ii)~$n'(M)\models r$ and (iii)~if
    $\U'\subseteq\U$ then $(n\cup\U')(M)\not\models r$.
  \end{itemize}
\end{definition}

In the database case~\cite{CEGN2013}, it is straightforward to show that repair trees are finite, since the
syntactic restrictions on database AICs guarantee that each rule can only be applied at most once in every
branch.
In the general MCS case, this is not true, as the following example shows.

\begin{example}
  \label{ex:weird-repair}
  Consider an ontology (represented as an MCS as in Section~\ref{sec:aic}) with four concepts $B_1$, $B_2$,
  $B_3$ and $D$.
  The T-Box contains axioms
  \[B_1\sqsubseteq D \qquad \mbox{ and }\qquad B_2\sqcap B_3\sqsubseteq D\]
  and the A-Box is $\{B_1(a),B_3(a)\}$.
  Furthermore, we have integrity constraints
  \begin{align*}
    (T:D)(a) &\aicarrow (A:\m{del}(B_1)(a))\mid(A:\m{del}(B_3)(a)) & (r_1) \\
    \pnot(T:B_1)(a),\pnot(T:B_2)(a) &\aicarrow (A:\m{add}(B_2)(a)) & (r_2)
  \end{align*}
\end{example}
\begin{wrapfigure}[9]r{.5\textwidth}
  \vspace*{-3em}\[\xymatrix@R-1em{
    \emptyset\ar[d]^{r_1} \\
    \{\m{del}(B_1)(a)\}\ar[d]^{r_2} \\
    \{\m{del}(B_1)(a),\m{add}(B_2)(a)\}\ar[d]^{r_1} \\
    \{\m{del}(B_1)(a),\m{add}(B_2)(a),\m{del}(B_3)(a)\}
  }\]
\end{wrapfigure}

Following this construction, we obtain the tree on the right, and its leaf is a grounded repair.


\begin{lemma}
  \label{lem:finite}
  $\mathcal T_{\langle M,\eta\rangle}$ is finite.
\end{lemma}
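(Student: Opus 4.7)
The plan is to prove finiteness of $\mathcal T_{\langle M,\eta\rangle}$ by showing the tree is finitely branching and has uniformly bounded depth, which gives finiteness directly (or via K\"onig's lemma). Throughout, I assume $\eta$ is finite and the import domains $D_i$ are finite, so that each rule in $\eta$ has only finitely many closed instances.

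The key structural observation is that every edge strictly grows the node. If $n'=n\cup\U$ is a child of $n$ via rule $r$, then $n(M)\not\models r$ by definition, and condition (ii) requires $n'(M)\models r$. If $\U\subseteq n$, then $n'=n$ and $n'(M)=n(M)\not\models r$, a contradiction. Hence $\U\not\subseteq n$, so $n'\supsetneq n$. Let $\mathcal A$ be the (finite) set of update actions appearing in the head of some closed instance of some rule in $\eta$; every node in the tree is a subset of $\mathcal A$. A branch is therefore a strictly increasing chain in $2^{\mathcal A}$, so has length at most $|\mathcal A|$.

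For finite branching, I would fix a node $n$ and enumerate its children. A child is determined by a pair $(r,\U)$ where $r$ is a closed instance of a rule in $\eta$ with $n(M)\not\models r$, and $\U\subseteq\head(r)$ is minimal with $(n\cup\U)(M)\models r$. There are finitely many such $r$ (finitely many rules, each with finitely many closed instances) and, since $\head(r)$ is a finite disjunction, only finitely many candidate $\U\subseteq\head(r)$. Hence each node has finitely many children. Combined with the depth bound above, the tree is finite.

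The main obstacle is really a bookkeeping one: one must be careful that Example~\ref{ex:weird-repair}, where rule $r_1$ fires twice along a branch, does not break the depth argument. The repetition is harmless because nodes along the branch are still strictly increasing sets of actions --- the second firing of $r_1$ adds a genuinely new action $\m{del}(B_3)(a)$ --- so the bound $|\mathcal A|$ on branch length still applies. Apart from this, the finiteness assumptions on $\eta$ and on the import domains (inherited from the managed MCS framework) are what make the set $\mathcal A$ of candidate actions finite; without them, the tree could be infinite already at the root.
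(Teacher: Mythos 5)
Your proof is correct and follows essentially the same route as the paper's: bounded branching (finitely many ground instances, each with finitely many candidate subsets of its head) plus bounded depth (branches are increasing chains of subsets of the finite set of all head actions). You are in fact slightly more careful than the paper in making the chains \emph{strictly} increasing --- deriving $\U\not\subseteq n$ from $n(M)\not\models r$ and $n'(M)\models r$ --- which is the step that actually justifies the depth bound.
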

\begin{proof}
  By definition, every node of $\mathcal T_{\langle M,\eta\rangle}$ has a finite number of descendants, since
  there are only finitely many ground instances of AICs with a finite number of actions in each one's head.
  By construction, in every branch the labels of the nodes form an increasing sequence (w.r.t.~set
  inclusion), and each node is again a subset of the (finite) set of all actions in the heads of all rules.
  Therefore, $\mathcal T_{\langle M,\eta\rangle}$ has finite depth and finite degree, hence it is finite.\qed
\end{proof}

\begin{lemma}
  \label{lem:complete}
  Every grounded repair for $\langle M,\eta\rangle$ is a leaf of $\mathcal T_{\langle M,\eta\rangle}$.
\end{lemma}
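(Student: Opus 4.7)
The claim splits naturally into two parts: $\U$ appears as a node of $\mathcal{T}_{\langle M,\eta\rangle}$, and this node is a leaf. The second part is the easier one: since $\U$ is in particular a weak repair, $\U(M)\models\eta$, so no AIC is violated at $\U(M)$ and the tree construction produces no children of $\U$ once it has been placed (logical consistency of $\U(M)$ is implicit in our notion of satisfaction).

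For the main claim, the plan is to construct by induction a strictly ascending chain $\emptyset = n_0 \subsetneq n_1 \subsetneq \cdots \subsetneq n_k = \U$ of tree nodes, each a child of the previous one, maintaining the invariant $n_j \subseteq \U$ throughout. Termination is automatic: $\U$ is finite and the chain strictly grows. The inductive step is where the content lies. Given $n_j \subsetneq \U$, groundedness of $\U$ applied to $\V = n_j$ supplies an AIC $r \in \eta$ with $n_j(M) \not\models r$ and $\head(r) \cap (\U \setminus n_j) \neq \emptyset$, which supplies the label of the edge from $n_j$ to its successor. One then chooses $n_{j+1} = n_j \cup \Delta$ with $\Delta \subseteq \head(r)$ minimal such that $(n_j \cup \Delta)(M) \models r$ and, crucially, $\Delta \subseteq \U \setminus n_j$.

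The main obstacle is justifying the existence of such a $\Delta$ inside $\U$. In the database case of~\cite{CEGN2013} this is essentially automatic, because every single head action of a database AIC already breaks its body, so minimal fixes are singletons supplied directly by groundedness; in MCSs a minimal fix may require several head actions at once, and nothing \emph{a priori} forces the globally minimal subset of $\head(r)$ to lie inside $\U$. My plan is to start from the candidate $\Delta^\star = \head(r) \cap (\U \setminus n_j)$, verify $(n_j \cup \Delta^\star)(M) \models r$ by combining $\U(M) \models r$ with the observation that elements of $\U \setminus (n_j \cup \Delta^\star)$ lie outside $\head(r)$ and hence cannot reactivate the body of $r$, and then contract $\Delta^\star$ to a subset $\Delta$ minimal within $\head(r)$. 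The subtle point is that a strictly smaller fix $\Delta' \subsetneq \Delta$ in $\head(r)$ but outside $\U$ would conflict with the grounded-minimality of $\U$ (mentioned after Definition~\ref{defn:aic-valid}), since it could be combined with the later steps of the construction to produce a proper sub-weak-repair of $\U$; ruling this out is the technical heart of the argument.
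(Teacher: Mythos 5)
Your overall strategy is the same as the paper's: use groundedness, applied to the proper subsets $n_j\subsetneq\U$ reached so far, to grow a branch $\emptyset=n_0\subsetneq n_1\subsetneq\cdots\subsetneq n_k=\U$ inside the tree, with termination by finiteness of $\U$ and the leaf property following from $\U(M)\models\eta$. The paper's proof is a two-line version of exactly this, asserting that for each $\U'\subsetneq\U$ there is a rule $r$ and a $\V\subseteq\head(r)\cap\U$ with $(\U'\cup\V)(M)\models r$, and you have correctly put your finger on the step that this assertion hides: groundedness only guarantees $\head(r)\cap(\U\setminus n_j)\neq\emptyset$, not that adding some such actions restores satisfaction of $r$.

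However, the justification you offer for that step does not go through. You claim $(n_j\cup\Delta^\star)(M)\models r$ for $\Delta^\star=\head(r)\cap(\U\setminus n_j)$ on the grounds that the discarded actions in $\U\setminus(n_j\cup\Delta^\star)$ lie outside $\head(r)$ and ``hence cannot reactivate the body of $r$.'' That locality principle is exactly what distinguishes database AICs (where every head action is the dual of a body literal and directly falsifies it) from AICs over MCSs, where satisfaction of $\body(r)$ is evaluated in the equilibria of the whole system and can be affected by \emph{any} update action. The paper's own Example~\ref{ex:weird-repair} is a counterexample to your principle: the action $\m{add}(B_2)(a)$ does not occur in $\head(r_1)$, yet applying it reactivates $\body(r_1)$; symmetrically, omitting actions of $\U$ that are outside $\head(r)$ can destroy satisfaction of $r$. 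So the existence of a suitable increment inside $\U$ needs a different argument (presumably combining groundedness with validity of $r$, Definition~\ref{defn:aic-valid}), and as written your inductive step can fail. A smaller remark: your closing worry about a minimal fix $\Delta'\subsetneq\Delta$ lying outside $\U$ is a non-issue, since condition~(iii) of the tree definition only quantifies over subsets of the chosen increment, and any $\subseteq$-minimal subset of $\Delta^\star$ with the restoring property satisfies it automatically.
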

\begin{proof}
  Let $\U$ be a grounded repair for $M$ and $\eta$.
  By definition of grounded repair, if $\U'\subseteq\U$ then there is a ground instance $r$ of an AIC such
  that: there exists $\V\subseteq\head(r)\cap\U$ such that $(\U'\cup\V)(M)\models r$.
  This directly yields a branch of the repair tree ending at $\U$.\qed
\end{proof}
(This is essentially the same argument for showing that, in the database case, grounded repairs are
well-founded, see~\cite{lcf16}.)

$\mathcal T_{\langle M,\eta\rangle}$ is constructed as the well-founded repair tree in the database
case~\cite{CEGN2013}.\footnote{There is also a notion of repair tree for databases in~\cite{CEGN2013}, but it
  relies on the ability of inferring heads of AICs automatically, which does not exist in the MCS setting.}
In both cases, this tree may, in general, contain leaves that are not grounded repairs~\cite{lcf16}.
Under the assumption that $\mbox{P}\neq\mbox{NP}$, this cannot be avoided, since existence of grounded repairs
for databases is a $\Sigma^P_2$-complete problem~\cite{lcf16}.

\paragraph{Complexity.}
The proof of Lemma~\ref{lem:finite} shows that the depth of $\mathcal T_{\langle M,\eta\rangle}$ is polynomial
in the size of the grounded instances of $\eta$.
Therefore, given an oracle that decides whether an MCS satisfies a set of AICs, the problem of existence of a
grounded repair for $\langle M,\eta\rangle$ is $\Sigma^P_2$-complete: $\mathcal T_{\langle M,\eta\rangle}$ can
be built in non-deterministic polynomial time (guessing which rule to apply at each node and using the oracle
to decide whether the descendant is a leaf), and the validation step can be done in co-NP time (if the leaf is
not a grounded repair, then we guess the subset that violates the definition and use the oracle to confirm
this).

\section{Discussion and Conclusions}
\label{sec:concl}



\paragraph{Validity.}
At the end of Example~\ref{ex:toy}, we pointed out that restoring consistent w.r.t.~an AIC $r$ may require
applying several actions in $\head(r)$.
This suggests allowing sets of actions (rather than actions) in the heads of AICs.
Besides increasing the complexity of our development, it is not clear that this change would bring significant
benefits.
In terms of computing repairs, we already cover those cases, since we add sets of actions when going from a
node to its descendents.
Also, it is not clear that there exists a situation when \emph{every} possible inconsistent MCS requires a set
of actions to repair.

One could also remove the second condition of validity of an AIC, i.e.~allow the actions in the head to be
insufficient to restore consistency of some MCSs.
This would remove some burden from the programmer who has to specify the AICs, and would not affect the
performance of the algorithms in Section~\ref{sec:algorithms}.
However, it would contradict the original motivation for AICs~\cite{Flesca2004}: that the actions in the head
of a rule should provide the means for restoring consistency.

\paragraph{Variants of AICs.}
The authors of~\cite{Flesca2004} also considered \emph{conditioned active integrity constraints}, where the
actions on the head of AICs are guarded by additional conditions that have to be satisfied.
In their setting, conditioned AICs do not add expressive power to the formalism, as they can be split into
several unconditioned AICs (with more specific bodies) preserving the notions of consistency and repairs.
In our setting, this transformation is not possible, and it would thus be interesting to study conditioned
active integrity constraints over multi-context systems.
However, we point out that the management function \emph{can} use information about the actual knowledge bases
in its implementation, so some conditions can actually be expressed in our setting (see
Example~\ref{ex:min-card}).

\paragraph{Conclusion.}
We proposed active integrity constraints for multi-context systems and showed that, using them, we can compute
grounded repairs for inconsistent MCSs automatically.
Although validity of AICs is in general undecidable, we showed that we can cover the most common types of ICs
in our framework.

\bibliographystyle{plain}
\bibliography{bibl}

\end{document}